\newtheorem{mydef}{Definition}
\newtheorem{fact}{Fact}
\def\({\left(}
\def\){\right)}
\renewcommand{\vec}[1]{\bm{\mathrm{#1}}}
\setlist[description]{leftmargin=0pt}
\renewcommand{\vec}[1]{\bm{\mathrm{#1}}}
\newcommand{\orcid}[1]{\href{https://orcid.org/#1}{\includegraphics[width=8pt]{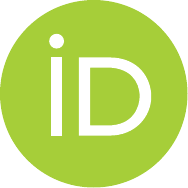}}}
\newcommand{\ketbra}[1]{\ket{#1}\bra{#1}}
\begin{document}
\title{Twirling and Hamiltonian Engineering via Dynamical Decoupling for GKP Quantum Computing}
\author{Jonathan Conrad \orcid{0000-0001-6120-9930}}\email{j.conrad1005@gmail.com}
\affiliation{Dahlem Center for Complex Quantum Systems, Physics Department, Freie
Universität Berlin, Arnimallee 14, 14195 Berlin, Germany}
\affiliation{Helmholtz-Zentrum Berlin für Materialien und Energie, Hahn-Meitner-Platz 1, 14109
Berlin, Germany}

\date{\today}
\begin{abstract}
I introduce an energy constrained approximate twirling operation that can be used to diagonalize effective logical channels in GKP quantum error correction, project states into the GKP code space and  construct a dynamical decoupling sequence with fast displacements pulses to distill the GKP stabilizer Hamiltonians from a suitable substrate-Hamiltonian. The latter is given by an LC-oscillator comprising a superinductance in parallel to a Josephson Junction. This platform in principle allows for protected GKP quantum computing without explicit stabilizer measurements or state-reset by dynamically generating a ``passively" stabilized GKP qubit.

\end{abstract}
\maketitle
%

\section{Introduction}
In our effort towards a scalable fault-tolerant quantum computer, bosonic error corrected quantum memories have recently gained much theoretical and experimental attention \cite{terhal_review, Grimsmo_2020, Campagne_Ibarcq_2020, Fluehmann_2019}. On the one hand this attention has been motivated by its resource efficiency as compared to standard qubit-based quantum error correction methods and its experimental viability using modern techniques, and on the other due to a promise of favourable properties when concatenated with more traditional qubit-based quantum error correcting codes \cite{Puri_2019, Puri_2020, vuillot2019quantum, Noh_2020_fault}. One particularly promising bosonic quantum error correcting code is the GKP code \cite{gottesman2001encoding} that encodes a qubit in a quantum mechanical oscillator via periodic, non-Gaussian eigenstates of the stabilizers 

\begin{equation}
S_p=D(\sqrt{2\pi})=e^{-i2\sqrt{\pi}\hat{p}} =\overline{X}^2,\hspace{1cm} S_q=D(\sqrt{2\pi})=e^{i2\sqrt{\pi}\hat{q}}=\overline{Z}^2,
\end{equation}
where the displacement operator is given as 
$
D(\alpha)=e^{\alpha \hat{a}^{\dagger}-\alpha^* \hat{a}}$ satisfying $D(\alpha)D(\beta)=e^{\omega(\alpha,\beta)}D(\beta)D(\alpha),
$
with complex symplectic form $\omega (\alpha,\,\beta)=\alpha \beta^*-\alpha^*\beta$, and the creation-and annihilation operators are defined via 
$
\hat{a}=\frac{\hat{q}+i\hat{p}}{\sqrt{2}},\;[\hat{a},\hat{a}^{\dagger}]=I.
$

The code states of the GKP code are by design especially well suited to correct against small displacement and have also been shown to be resilient against photon loss \cite{Albert_2018,terhal_review} which is a common source of errors in quantum harmonic oscillator systems.  Since the first scalable proposals for experimental implementation of the GKP code \cite{Terhal_2016}, it has also found place in applications beyond the stabilization of a single logical qubit --  such as metrology \cite{duivenvoorden2017single} or error corrected transmission of general bosonic states \cite{Noh_2020_encoding}.

Crucially, energy constraints limit how well code states can be obtained in practise, such that physical realizations can only be obtained \textit{approximately} in a form \cite{gottesman2001encoding, matsuura2019equivalence}  (up to normalization)
\begin{align} 
\ket{0_{\Delta}} &=\int_{\mathbb{R}} dq\;  \sum_{n \in \mathbb{Z}} e^{-2\Delta^2 \pi n^2 } e^{-\frac{1}{2 \Delta^2} (q-2n\sqrt{\pi})^2}\ket{q}, \label{eq:codestate_0}\\
\ket{1_{\Delta}}  &=\int_{\mathbb{R}} dq\;  \sum_{n \in \mathbb{Z}} e^{-2 \Delta^2 \pi n^2 }e^{-\frac{1}{2 \Delta^2} (q-(2n+1)\sqrt{\pi})^2}\ket{q},\label{eq:codestate_1}
\end{align}
with squeezing parameter $\Delta<1$ that characterises the quality of the approximation.
Given such encoded states, arbitrary fault-tolerant single-and two qubit gates can be implemented via standard Gaussian-operations and homodyne measurements on the quadratures of the oscillator \cite{Baragiola_2019, Yamasaki_2020}. Error correction can be performed via Steane-Type or Knill-Type error correction where modular quadrature displacements $q=\epsilon_q\mod\sqrt{\pi},\; p=\epsilon_p\mod\sqrt{\pi}$ are learned from stabilizer measurements and suitable correction displacements are applied \cite{gottesman2001encoding, Glancy_2006}. Alternatively, stabilizer measurement and correction can also be simulated by reservoir-engineering \cite{royer2020stabilization}, where an ancillary qubit is repeatedly entangled with the oscillator and reset. This approach is akin to a time-discretized version of known driven-dissipative engineering schemes for the \textit{cat-}code \cite{Mirrahimi_2014, Cohen_2017}.

Aside from active error correction implementations or the autonomous simulation thereof based on explicit dissipation mechanisms, implementations of the GKP stabilizer Hamiltonian 
\begin{equation}
H_{GKP} \propto -S_q-S_p +h.c. = -2 \cos(2\sqrt{\pi} \hat{p}) - 2 \cos(2\sqrt{\pi} \hat{q}) \label{eq:GKPHamiltonian}
\end{equation}
for passive error correction have also recently been proposed using a gyrator circuit  \cite{rymarz2020hardwareencoding} or phase-slip junction \cite{Le2020DoublyNS}. The form of the Hamiltonian implies a $2-$fold degenerate ground space which can be associated with the code space. For (qubit-) systems where the stabilizers have discrete spectrum, such stabilizer Hamiltonian yields a finite energy barrier to stabilize the qubit ground-space when the system is weakly coupled to a thermal bath at low temperature. For the perfect GKP-Hamiltonian, however, this is not immediately the case as the stabilizer have continuous spectrum; but an energy gap as consequence of an imperfect approximation to the GKP Hamiltonian may still arise \cite{rymarz2020hardwareencoding}.

The central theme of this paper will revolve around the notion of \textit{twirling}. Originally proposed in terms of \textit{state-twirling} as a tool for entanglement purification in \cite{Bennett_1996_purif,Bennett_1996_twirl}, it is most often discussed in its incarnation as a \textit{channel-twirl} which has been used widely in the simulation of error channels for quantum error correction and can be implemented in practice to remove coherent error build-up. Aiming at a practically feasible implementation of a Pauli-twirl for GKP encoded qubits, I design a suitable \textit{twirl-measure} which results in a logical Pauli-channel twirl that respects the bosonic code-degeneracy and can be tuned to adapt the quality of the twirl to energy constraints that the physical platform might have. The analysis of the twirl happens at the level of the \textit{chi-function}, which has proven to be a useful representation to analyse bosonic channels that GKP states traverse.
The same twirl-measure can be used to design an energy constrained GKP state-twirl to filter out unwanted state-contributions by acting as an approximate projector on the characteristic function of an input state.
\\
The commonality between a state- and a channel-twirl is its underlying structure of a \textit{unitary group projector} onto the commutant of a twirling group $\mathcal{G} \subset \mathcal{U}(d)$. In the case of a state-twirl, the density matrix is input to this projection; in the case of a channel twirl, it is the natural representation of the channel. A further effective implementation of a unitary group projector in physical systems exists in the framework of \textit{dynamical decoupling} (DD) \cite{Viola_1999, Zanardi_1999} where the Hamiltonian of a quantum system undergoes the group projection as an effect of fast, coherent control. I show how the defined twirling measure can be translated into a bang-bang periodic dynamical decoupling (BPD) sequence which acts as an approximate projector on the characteristic function of the Hamiltonian and propose a superconducting circuit involving a Josephson-Junction (JJ) which, after twirling, results in an approximation of the GKP Hamiltonian \eqref{eq:GKPHamiltonian}. The gapped spectrum and quality of GKP-Eigenstates of this effective Hamiltonian are studied numerically and I point out how the same techniques can be used to implement and tune a logical Pauli + stabilizer Hamiltonian when also a superconducting circuit element is present that restricts to $2-$cooper pair tunneling.

I note that this application of dynamical decoupling to Hamiltonian engineering is unorthodox as instead of trying to decouple unwanted error-interactions, dynamical decoupling in this proposal is used to distill targeted Hamiltonian terms from a \textit{substrate-Hamiltonian}. This technique can be considered a form of \textit{Floquet-} Hamiltonian engineering \cite{Oka_2019} which has already been studied to obtain Hamiltonians inaccessible in static systems.

The structure of the paper will be as follows. First I discuss the notions of characteristic functions of states and Hamiltonians and that of the chi-function that will be used throughout this paper in section \ref{sec:prelim} and point out examples and applications for every notion I introduce. This perspective will prove convenient when in section \ref{sec:twirl} the finite energy regularized GKP twirl is explained.
In section \ref{sec:Hamilton_engineering} I show how the twirl can be used to design a dynamical decoupling sequence using bang-bang displacements and apply the DD sequence to a suitable superconducting circuit to distill the GKP Hamiltonian. I discuss inequalities on the experimental design parameters necessary to realize the scheme proposed. I close by naming possible improvements to my scheme and related questions that require more theoretical investigation in the future.

\section{Preliminaries}\label{sec:prelim}

I make extensive use of the displacement operator basis for quantum states, -channels and Hamiltonians. This will prove as the most natural and useful choice since the GKP stabilizers are displacement operators and hence GKP QEC naturally corrects against small displacements. I adopt the convention $\hbar=1$ wherever absolute units are of no relevance.

The displacement operators form an operator basis on the bosonic Hilbert space, satisfying  orthogonality \cite{gottesman2001encoding}
\begin{equation}
\mathrm{Tr} D^{\dagger}(\alpha)D(\beta)=\pi \delta^2 (\alpha -\beta ), \label{eq:displacement_orthogonal}
\end{equation}
such that every operator $F$ can be expressed in terms of its characteristic function $f(\alpha)=\mathrm{Tr}D^{\dagger}(\alpha) F$ as 
\begin{equation}
F=\frac{1}{\pi}\int d^2 \alpha\; f(\alpha) D(\alpha). \label{eq:O_expansion}
\end{equation}

\paragraph{States} The characteristic function of a state $\rho(\alpha):=TrD^{\dagger}(\alpha) \rho =\rho^*(-\alpha),\; \rho(0)=1$ is what is conventionally denoted as \textit{the} characteristic function. To ease communication, I will regard all complex valued coefficients of operators in the displacement basis as ``characteristic function".
A well known quantity related to the characteristic function of a state is the Wigner function,
\begin{equation}
W(\alpha)=\pi^{-2}\int d^2\beta \, e^{\omega(\alpha,\beta)} \rho(-\beta),\label{eq:Wigner}
\end{equation}
which serves as quasiprobability distribution to visualize states and interpret their support in phase space via the association $\alpha=\frac{q+ip}{\sqrt{2}}$. 

The characteristic function yields simple expressions for GKP stabilizer- and logical Pauli expectation values and can be used to quantify the quality of GKP states by adopting the definition of effective squeezing parameters in \cite{Weigand_2020}
\begin{equation}
\Delta_q=\sqrt{\frac{-1}{\pi}  \mathrm{ln}\left( {|\rho\left(i\sqrt{2\pi}\right)|}\right)} ,\; \Delta_p=\sqrt{\frac{-1}{\pi}\,\mathrm{ln}\left( {|\rho\left(\sqrt{2\pi}\right)|}\right)},
\end{equation}
which are invariant under displacement. These effective squeezing parameters can be associated with the parameter $\Delta$ in \eqref{eq:codestate_0},\eqref{eq:codestate_1}, which grow rapidly with decreasing $|\rho\left(\sqrt{2\pi}\right)|,\, |\rho\left(i\sqrt{2\pi}\right)|$ and reflect the amplitude of the $\sqrt{2\pi}$ periodic Fourier component of the Wigner function.

\paragraph{Hamiltonians}
Similar to the representation of states, one may also represent a Hamiltonian via its characteristic function $h(\alpha)=\mathrm{Tr}D^{\dagger}(\alpha) H$, which satisfies $h^*(-\alpha)=h(\alpha)$. 
Examples of Hamiltonians admitting a particularly simple characteristic function are the passive GKP Hamiltonian \eqref{eq:GKPHamiltonian} with
\begin{equation}
h_{GKP}(\alpha)= - \delta^2(\alpha-\sqrt{2\pi})-\delta^2(\alpha+\sqrt{2\pi})-\delta^2(\alpha-i\sqrt{2\pi})-\delta^2(\alpha+i\sqrt{2\pi}), \label{eq:h_GKP}
\end{equation}
the \textit{2-legged cat} qubit Hamiltonian $H\propto -\sigma_L^z$, where for ($|\beta|\gg 1$)
\begin{align}
\sigma_L^z&=\ketbra{C_\beta^+}-\ketbra{C_\beta^-}=2\ket{\beta}\bra{-\beta}+2\ket{-\beta}\bra{\beta}, \;
\ket{C_\beta^{\pm}}=N_{\beta}^{-1}(\ket{\beta}\pm \ket{-\beta}),
\end{align} 
it holds that
\begin{equation}
h_{cat,\, 2}(\alpha) \propto -e^{-\frac{1}{2}|\alpha-2\beta|^2}-e^{-\frac{1}{2}|\alpha+2\beta|^2}.
\end{equation}
Similarly, for passive encoding of the  \textit{4-legged cat} \cite{Cohen_2017} one aims at implementing a characteristic function
\begin{equation}
h_{cat,\, 4}(\alpha) \propto -e^{-\frac{1}{2}|\alpha-2\beta|^2}-e^{-\frac{1}{2}|\alpha+2\beta|^2}-e^{-\frac{1}{2}|\alpha-i2\beta|^2}-e^{-\frac{1}{2}|\alpha+i2\beta|^2}.
\label{eq:4cat}
\end{equation}
The 4-legged cat and GKP Hamiltonians are visualized in fig. \ref{fig:hchar} $(a)$.

\begin{figure}
\center
\includegraphics[width=.5\textwidth]{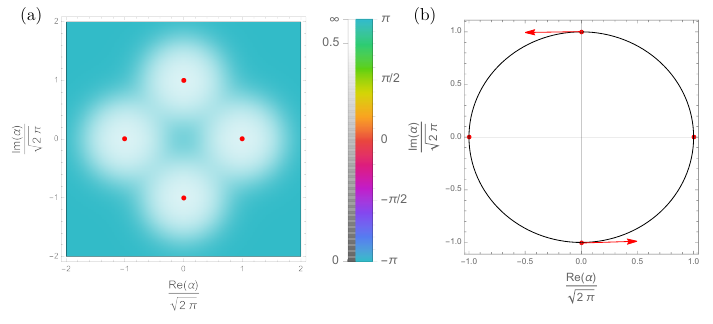}
\caption{$(a)$ The characteristic function $h_{cat,\, 4}(\alpha)$ with $\beta=\sqrt{\frac{\pi}{2}}$ and in red the position of the delta peaks of the GKP stabilizer Hamiltonian. $(b)$ the characteristic function $h_{JJ}(\alpha;t)$  with $\varphi=\sqrt{2\pi}$ traverses the indicated circle with time. For $\omega t\geq 2\pi$ the rotating points are smeared out over the circle which represents the first order RWA $\overline{h}^{(1)}_{JJ}(\alpha; \,t)$. The delta peaks of the characteristic function $h_{GKP}(\alpha)$ are indicated in red.} \label{fig:hchar}

\end{figure}

\begin{figure}
\center
\centering
\includegraphics[width=.3\textwidth]{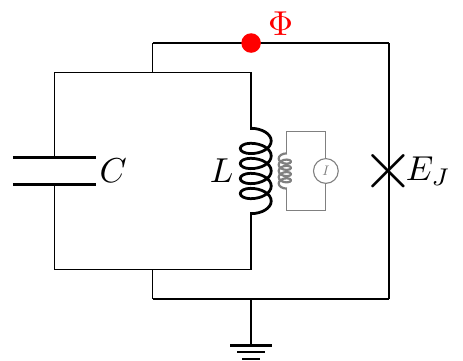}
\caption{Quantum harmonic oscillator comprising a cavity and a (super-)inductance coupled to a Josephson Junction. In gray a circuit element is indicated to implement displacements on the oscillator via inductive coupling.}\label{fig:circuit}
\end{figure}

An important Hamiltonian for this paper will be that of a quantum harmonic oscillator with frequency $\omega=\sqrt{LC}^{-1}$ coupled to a Josephson junction (JJ) as illustrated in fig. \ref{fig:circuit}. The characteristic function of the JJ Hamiltonian is given by

\begin{equation}
h_{JJ}(\alpha)=Tr[D^{\dagger}(\alpha)H_J]=\frac{-E_J}{2}\left\{\delta^{2}(\alpha-i\varphi)+\delta^{2}(\alpha+i\varphi) \right\}, \label{eq:hJJ}
\end{equation}
where $E_J$ is the Josephson energy and $\varphi=\sqrt{\frac{\pi Z}{R_Q}}$ where $Z=\sqrt{\frac{L}{C}}$ is the impedance of the cavity mode as seen by the junction and $R_Q=\frac{h}{(2e)^2}$ the resistance quantum.

In a frame co-rotating with the bare oscillator the effective Hamiltonian characteristic function for $H_{rot}(t)=U_0^{\dagger}HU_0+i\frac{d U_0^{\dagger}}{dt}U_0$ reads 
\begin{equation}
h_{rot}(\alpha;\, t)=h_{int}(\alpha e^{-i\omega t}).
\end{equation}
The unitary evolution of the system in this frame is given by the Magnus expansion
\begin{equation}
U(t)=\mathcal{T}\exp{\left(-i\int_0^{t} H_{rot}(t') dt\right)}= \exp{\left(-i \overline{H}(t)\right)}, \label{eq:Magnus}
\end{equation}
for which term-wise characteristic functions  of $\overline{H}(t)=\sum_k \overline{H}^{(k)}(t)$ can be evaluated to

\begin{align}
\overline{h}^{(1)}(\alpha; \,t)&=\int_0^t h_{int}(\alpha e^{-i\omega t'}) dt', \label{eq:h_first}\\
\overline{h}^{(2)}(\alpha; \,t)&=-i\int d^2\beta \int_0^t dt' \int_0^{t'}dt'' h_{int}((\alpha-\beta ) e^{-i\omega t'}) h_{int}(\beta e^{-i\omega t''}) \sin{\left( Im \left(\alpha \beta^* \right) \right)}, \label{eq:h_second} \\
\overline{h}^{(3)}(\alpha; \,t)&=...
\end{align}

The first order term, written out as
\begin{align}
\overline{H}^{(1)}(t)
&=\int d^2\alpha \left[\int_0^t h_{int}(\alpha e^{-i\omega t'}) dt'\right] D(\alpha) \\
&=\int d\phi_{\alpha} d|\alpha| |\alpha| h_{int}(|\alpha| e^{i\phi_{\alpha}} ) \left[\int_0^t  D(|\alpha| e^{i(\phi_{\alpha}+\omega t' )})dt'\right],
\end{align}
where the integration parameter is written as $\alpha=|\alpha|e^{i\phi_{\alpha}}$, corresponds to a phase-average of the characteristic function.
For large $\omega t\geq 2\pi$ the phase average of the characteristic function can be understood as a rotated \textit{smearing} in phase space, such that the phase information of the characteristic function is integrated out $\int_0^t h_{int}(\alpha e^{-i\omega t'}) dt' =h_{int}(|\alpha|)$ and  produces an operator diagonal in the Fock basis. 

This can also be seen directly by evaluating the displacement operator in the Fock basis \cite{Cahill_Glauber, Cohen_2017}
\begin{align}
\int_0^t dt' \braket{m|D(|\alpha| e^{i(\phi_{\alpha}+\omega t' )})|n} 
&\xrightarrow{\omega t \geq 2\pi}  \delta_{m,n} L_n(|\alpha|^2) e^{-\frac{|\alpha|^2}{2}} t,
\end{align}
where $L_n(\cdot)$ are the  Laguerre polynomials 

In first order RWA  the Hamiltonian in \eqref{eq:hJJ} is given by
\begin{equation}
\overline{h}^{(1)}_{JJ}(\alpha; \,t)= -E_J\delta(|\alpha|-\varphi) t \label{eq:hJJ_RWA}
\end{equation}
which is a good approximation when $\hbar \omega \gg E_J$ and $\omega t \geq 2\pi$, such that higher order terms to the Magnus expansion are negligible \cite{Cohen_2017} and the phase information is fully smeared out.

\paragraph{Channels}
Every CP superoperator $\mathcal{N}$ on an oscillator Hilbert space admits a contiuous \textit{chi}-function representation 
\begin{align}
\mathcal{N}(\rho) = \iint d^2\alpha d^2\beta c(\alpha, \beta)D(\alpha)\rho D(\beta)^{\dagger}, \label{eq:cont_chi}\\
\int d^2\beta c(\alpha +\beta, \beta) e^{\omega(\alpha, \beta)/2} \leq\delta^{2}(\alpha).   \label{eq:cont_chi_norm}
\end{align}
This is because CP maps $\mathcal{N}$ admit the Kraus form
\begin{equation}\mathcal{N}(\rho)=\sum_l E_l \rho E_l^{\dagger}\hspace{1cm}  \text{with}\hspace{1cm} \sum_l E_l^{\dagger}E_l \leq 1 \label{eq:Kraus}.
\end{equation} 
Equality in the last equation is given when the map is also TP.  Applying \eqref{eq:O_expansion} to each Kraus operator yields \eqref{eq:cont_chi} with 
\begin{equation}
c(\alpha,\beta) =\pi^{-2}\sum_l c_l(\alpha)c_l^*(\beta), \hspace{1cm} c_l(\alpha)=Tr[D^{\dagger}(\alpha) E_l]. \label{eq:cont_chi_coeff}
\end{equation}
Similarly, \eqref{eq:cont_chi_norm} is derived. \eqref{eq:cont_chi_coeff} also implies that the diagonal elements of  $c(\alpha,\beta)=c^*(\beta, \alpha)$ are real-valued.

For coherent channels, $c(\alpha, \beta)$ factorizes. This is for example the case for the finite squeezing approximation to GKP states, see \cite{gottesman2001encoding}, where 
\begin{equation}
c(\alpha, \beta)=\frac{1}{\pi \Delta^2} e^{-\frac{|\alpha|^2}{\Delta^2}} e^{-\frac{|\beta|^2}{\Delta^2}}. \label{eq:finite_squeezing_c}
\end{equation}
To draw an example for incoherent noise, I evaluate the chi-function for the photon loss channel in appendix \ref{appendix:photonloss}, for which the Kraus operators are given as 
\begin{equation}
\hat{E}_l =\left(\frac{\gamma}{1-\gamma}\right)^{\frac{l}{2}}\frac{\hat{a}^l}{\sqrt{l!}}(1-\gamma)^{\hat{n}}, \gamma =1-e^{-\kappa t},
\end{equation}
with loss rate $\kappa$. The chi-function takes the form
\begin{equation}
c^{\gamma}(\alpha,\beta)=\left(\frac{\overline{\gamma}}{\pi}\right)^2\braket{\beta|\alpha}^{2\overline{\gamma}-1}, \label{eq:c_photonloss}
\end{equation}
where $\braket{\beta|\alpha} $ denotes the inner product between coherent states at phase-space positions $\alpha,\,\beta$ and I have introduced the effective loss parameter
\begin{equation}
\overline{\gamma}=\frac{1}{1-\sqrt{1-\gamma}}=\frac{2}{\gamma}-\frac{1}{2}-\frac{\gamma}{8} +O(\gamma^2)\in [1,\infty). 
\end{equation}
From \eqref{eq:c_photonloss} it can be seen that coherent, i.e. off-diagonal, contributions to the chi-function are exponentially suppressed in the distance $|\alpha-\beta|$, an effect that is amplified for smaller $\gamma \approx \kappa t$. Diagonal elements of the chi function have constant amplitude, which emphasizes the non-classicality of this channel.

\section{Regularized Twirling}\label{sec:twirl}

\subsection{Channel and -state twirling}
The notion of twirling was first introduced in \cite{Bennett_1996_purif,Bennett_1996_twirl} as tool for entanglement purification for a two-qubit state. 
The basic idea is to draw elements of a unitary subgroup $G$ at random and apply them to an input state $\rho$,

\begin{equation}
\Pi(\rho)=\frac{1}{|G|}\sum_{g\in G} g \rho g^{\dagger}.
\end{equation}
As a consequence of the group properties, it can be shown that $\Pi$ acts as projector onto the commutant $\mathcal{C}G$ of $G$, which is for instance non-trivially spanned by the qubit-Bell states if $G=\{g\otimes g | g \in \langle X, Z \rangle  \}$ where  $X,\,Z $ are the Pauli matrices \cite{Bennett_1996_twirl}. 
This idea is extended to the twirling of qubit-channels by considering the projector over groups of the form $T=\{ g\otimes \overline{ g} | g \in G \}$ acting on the natural representation of a channel (with Pauli-basis chi-matrix $\chi_{\alpha\beta}$) \begin{equation}
\hat{N}=\sum_{\alpha,\beta \in \{I,x,y,z\}} \chi_{\alpha\beta} \sigma_{\alpha}\otimes \overline{\sigma}_\beta.
\end{equation}
Choosing $G$ to be the single qubit Pauli group and using that Paulis either commute or anti-commute, it can be seen that such twirl effectively renders $\chi_{\alpha\beta}$ diagonal. Physically the twirled channel is obtained by the following sequence:\; 
$1.$ draw a unitary $g\in G$ at random and apply it to the input state,
$2.$ apply the channel,
$3.$ apply the $g^{\dagger}$ to the output of the channel.
When averaged over the channel outputs, the average channel will correspond to the twirled one. Pauli Twirling for qubit systems has been studied extensively in the literature, see e.g. \cite{Cai_2019, katabarwa2017dynamical} and references therein. Twirling for channels acting on GKP encoded qubits has also been proposed \cite{wang:ms-thesis}, but so far necessitates the implementation of infinitely large displacements with large probability. While this proves useful to approximate channels for numerical simulations, it is not a physical operation that can be implemented. In the following, to design an approximate logical twirl for a GKP-encoded qubit I will deviate from the standard design of twirls in two ways: Firstly, the twirling group will not just range over the logical Pauli gates, but also include their logicals equivalents to actively take advantage of the fact that logical GKP encoded gates differing by stabilizers correspond to different physical operations. Furthermore, I show how the probability distribution over the twirling group can be adapted to tune the strength of the twirl, which is necessary since what I construct in the following is an \textit{approximate} twirl. Equipped with these ingredients, I show how they can be used for state twirling by logical- and stabilizer operations. 

\begin{mydef}
A \textit{displacement-}twirling operation $\tau$ with respect to (twirl-) measure $\mu$ is a CPTP-preserving map on superoperators, that maps a superoperator as in \ref{eq:cont_chi} to
\begin{align}
\tau \circ \mathcal{N}(\rho)
&=\int d\mu(\gamma) D(\gamma)^{\dagger}\mathcal{N}\left(D(\gamma)\rho D^{\dagger}(\gamma)\right)D(\gamma)\\
&=\int d\mu(\gamma)\iint d^2\alpha d^2\beta c(\alpha, \beta) D(\gamma)^{\dagger}D(\alpha) D(\gamma)\rho D^{\dagger}(\gamma)D(\beta)^{\dagger}D(\gamma) \nonumber \\
&=\iint d^2\alpha d^2\beta c(\alpha, \beta) \left[\int d\mu(\gamma) e^{\omega (\alpha -\beta, \gamma)} \right] D(\alpha)\rho D(\beta)^{\dagger}. \label{eq:twirl}
\end{align}
\end{mydef}
Repeated twirling effectively acts as a map on the chi-function 
\begin{equation}
\tau^N:\; \;c(\alpha, \beta) \mapsto  c_{\tau,N}(\alpha, \beta)=c(\alpha, \beta) \left[\int d\mu(\gamma) e^{\omega (\alpha -\beta, \gamma)} \right]^N. \label{eq:twirl_n}
\end{equation}
For the uniform measure, as considered for twirling the GKP finite-squeezing error in \cite{wang:ms-thesis, vuillot2019quantum}, $d\mu(\gamma)=d^2\gamma$, the chi-function twirls to
\begin{equation}
c_{\tau}(\alpha, \beta)=c_{\tau,1}(\alpha, \beta)=\pi^2 c(\alpha, \beta) \delta^{2}(\alpha-\beta).
\end{equation}

Uniform displacement-twirling hence renders any channel a stochastic displacement channel. As I pointed out, this is not a physical operation since arbitrarily large displacements are involved with high probability. A state undergoing a uniformly twirled channel hence would need to be pumped up to infinite average photon number in intermediate steps of the channel, rendering this approach unfeasible in practice. 
The same argument applies to the state-twirling operation using stabilizer-shifts considered in \cite{Noh_2020_fault}.

In the following I give an implementation of a regularized, feasible channel-twirl via displacements that respects the GKP code-degeneracy.
The aim is modest, instead of asking for an effectively diagonal chi-function, I aim to obtain a logical Pauli-channel (similar to twirling channels for a regular qubit). To this end consider the measure 
\begin{equation}
d\mu(\gamma)=\frac{1}{16}\sum_{m,n \in \{-1,0,0,1\}} \delta^{2}\left(n\sqrt{\frac{\pi}{2}}+im\sqrt{\frac{\pi}{2}}-\gamma\right) d^2\gamma. \label{eq:measure}
\end{equation}
\begin{figure}
    \centering
    \includegraphics{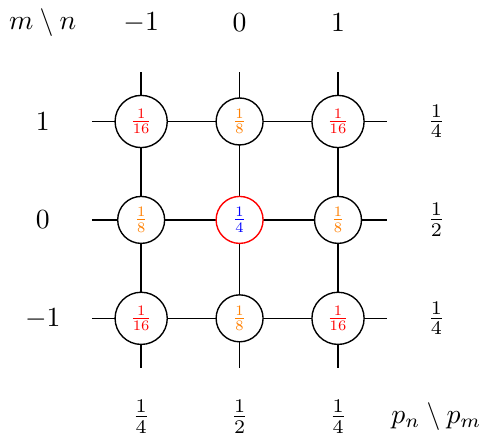}
    \caption{Starting at the red node, the transition probabilities by the local moves in the random walk $X$ are indicated. The transition probabilities are given by the joint probabilities of the $1D$ random walks along each axis.}
    \label{fig:rdwalkX}
\end{figure}
Note that the sums each explicitly involve two $0$'s. The resulting local probability distribution of random displacements is summarized in fig. \ref{fig:rdwalkX}.
With this choice of measure, \eqref{eq:twirl_n} becomes 
\begin{equation}
c_{\tau,N}(\alpha, \beta)=c(\alpha, \beta) \left[\frac{1}{4} \left(1+\cos\sqrt{2\pi}\mathrm{Re}(\alpha-\beta)\right) \left(1+\cos\sqrt{2\pi}\mathrm{Im}(\alpha-\beta)\right) \right]^N. \label{eq:log_twirl}
\end{equation}
In the limit of large $N$, the twirl enforces that $\mathrm{Re}(\alpha-\beta)\, mod\, \sqrt{2\pi}=\mathrm{Im}(\alpha-\beta)\, mod\, \sqrt{2\pi}=0$ for all $\alpha,\,\beta$ that non-trivially support $c_{\tau, N}$ in \eqref{eq:log_twirl}, that is $\alpha$ and $\beta$ will only differ by stabilizer shift.. Furthermore, if the support of $c(\alpha, \beta)$ is sufficiently narrow (i.e. compact on a radius much smaller than $\sqrt{2\pi}$ on either of the arguments), it can further be argued that only the stochastic terms $c(\alpha, \alpha)$ prevail (similar to the argument in \cite{Noh_2020_fault}). The factor $c_{\tau,N}(\alpha, \beta)/c(\alpha, \beta)$ for $N=1,10$ is plotted in fig. \ref{fig:factor_plot}.
\begin{figure}
\center
\centering
\includegraphics[width=.5\textwidth]{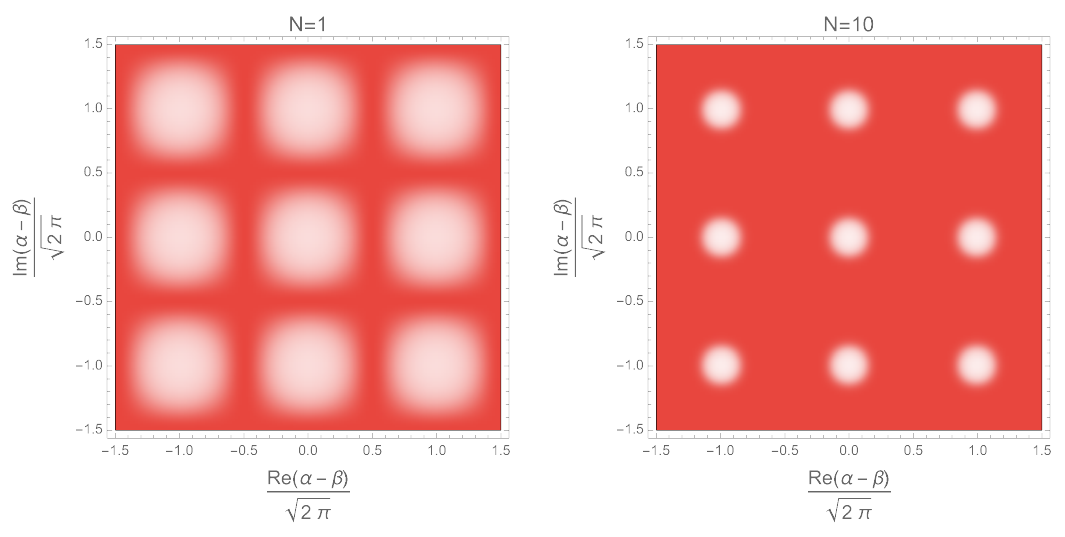}
\caption{The factor $c_{\tau,N}(\alpha, \beta)/c(\alpha, \beta)$ for $N=1,10$. The peaks sharpen with increasing $N$ and are separated by $\sqrt{2\pi}$. Colorbar is as in fig. \ref{fig:hchar}.}\label{fig:factor_plot}
\end{figure}

\paragraph{Recompiling $\tau^N$}
For $N$ repetitions of the local twirl,  the displacements of consecutive rounds can be combined and  indexed by $(n,m) \in \{-N,..,N\}^2$. The resulting probability distribution of displacements can be interpreted as one of a $N$-step random walk starting at $(0,0)$ on a square lattice, where the vertices represent the labels $(n,m)$  (see also fig. \ref{fig:rdwalkX}). In appendix \ref{appendix:recompile} I derive the  probability to end up on vertex $(n,m)$ after $N$ steps, given by
\begin{equation}
P_X^N(n,m)=2^{-4 N}\binom{2N}{n+N}\binom{2N}{m+N},
\end{equation}
which gives an expression for the $N-$level twirl-measure
\begin{equation}
d\mu_N(\gamma)=\sum_{n, m =-N}^N P_X^N(n,m) \delta^{2}\left(n\sqrt{\frac{\pi}{2}}+im\sqrt{\frac{\pi}{2}}-\gamma\right) d^2\gamma. \label{eq:measure_N}
\end{equation}
Thus, by sampling the displacement indices $(n,m)$ from $P_X^N(n,m)$ in \eqref{eq:dispN_prob}, one can obtain an effective $N$-step twirl $\tau^N$ in a single shot. Note that (in the limit $N \rightarrow \infty$) what is considered here is a \textit{logical} twirl, such that the resulting channel behaves as a logical Pauli-channel. This is however done in a manner that respects the logical degeneracy (i.e. equivalence up to stabilizer shifts). \\
The maximum gain in average photon number in the intermediate step of the twirled channel is $\Delta \langle n \rangle_{max} =\pi N^2$,
which will be obtained with probability P(extremal)=$2^{2-4N}$.

\subsection{State twirling}

\paragraph{The completely mixed logical state}
The $N-$step twirling measure derived in this section can also be employed to twirl a state as opposed to a channel. To this end consider the action on the characteristic function of the state. Depolarizing a state with initial characteristic function $\rho$ with the previously used measure $\mu$ gives 
\begin{align}
\rho \mapsto \Pi(\rho) &=\int d\mu(\gamma) D(\gamma) \rho D^{\dagger}(\gamma)   \label{eq:state_twirl}\\
&=\int d^2\alpha \rho (\alpha)\left[\int d\mu(\gamma) e^{\omega (\alpha, \gamma)} \right] D(\alpha).
\end{align}
Applying the $N$-level twirl with respect to the recompiled measure yields 
\begin{equation}
\Pi^N:\;\; \rho(\alpha)\mapsto \rho (\alpha)\left[\frac{1}{4} \left(1+\cos\sqrt{2\pi}\mathrm{Re}(\alpha)\right) \left(1+\cos\sqrt{2\pi}\mathrm{Im}(\alpha)\right) \right]^N.    
\end{equation}
That is, in the limit of large $N$, the displacement coefficients will be supported compactly on $\{\alpha \in \mathbb{C}: \mathrm{Re}(\alpha) \mathrm{mod} \sqrt{2\pi}=
 \mathrm{Im}(\alpha) \mathrm{mod} \sqrt{2\pi}=0 \}$. A $\sqrt{2\pi}$ $2D-$ translationally invariant characteristic function implies that the Wigner function is supported on a lattice with periodicity $\sqrt{\frac{\pi}{2}}$, which in turn translates to $\sqrt{\pi}$-shift periodicity of $q,\, p$. The resulting state is a logical GKP completely mixed-state when the input state had any operator support on the GKP-lattice. Since its periodic structure is present, such completely logically depolarized state may in principle be used as ancillary state to realize modular GKP stabilizer measurements in scenarios where logical entropy feedback from the ancillary oscillator is irrelevant, e.g. in a single-shot metrological setup as in \cite{Noh_2020_encoding}, and as such may serve as a way to ``recycle" some states at the end of a computation.

\paragraph{Stabilizer Twirling}
In \cite{Noh_2020_fault} a state-twirling by powers of stabilizer-shifts was used to motivate the stochastic displacement approximation to finite squeezing errors. Since their approach involved a uniform measure over all possible stabilizer shifts, it can only be regarded as analytical tool.
To obtain a physically feasible implementation, I adapt my scheme to their setting by doubling the displacement shift-lengths. The probability distribution over the random walk on the, by a factor of $2$ stretched, lattice remains the same $P_X^N(n,m)$ as above.
In this case, in the limit of large $N$, the displacement coefficients will be supported compactly on $\{\alpha \in \mathbb{C}: \mathrm{Re}(\alpha)\, \mod\,  \sqrt{\frac{\pi}{2}}=
 \mathrm{Im}(\alpha) \mathrm{mod} \sqrt{\frac{\pi}{2}}=0 \}$, which translates to a $2\sqrt{\pi}$-shift periodicity of $(q,p)$ in real (phase-)space, 
\begin{equation}
\Pi_{stab.}^N:\;\; \rho(\alpha)\mapsto \rho (\alpha)\left[\frac{1}{4} \left(1+\cos 2\sqrt{2\pi}\mathrm{Re}(\alpha)\right) \left(1+\cos 2\sqrt{2\pi}\mathrm{Im}(\alpha)\right) \right]^N.  \label{eq:twirl_stab} 
\end{equation}
In contrast to the previous logical twirl, this twirl preserves the GKP-logical information. However, since $\rho(\sqrt{2\pi}),\,\rho(i\sqrt{2\pi})$ and $\langle Z\rangle=\rho\left(i\sqrt{\frac{\pi}{2}}\right) $ are preserved under this operation (assuming no errors during twirling), the effective squeezing parameter and logical Pauli expectation values remain invariant.
The average photon-number gain after the $N$-level twirl (modified by a factor $\frac{1}{4}$ for the logical state-twirl) is given by $\Delta \langle n \rangle_{state}=2\pi N$, such that twirling-level $N$ directly characterizes the average energy gain of the system.

\section{Dynamical engineering of GKP qubits} \label{sec:Hamilton_engineering}
\subsection{From Twirling to Dynamical decoupling} \label{sec:twirl_dd}
%

In addition to applying the twirl to channels and states, it can be shown that the framework of \textit{bang-bang  periodic Dynamical decoupling} (BPD) realizes a twirl of Hamiltonians. In the following I will briefly recap the essentials off  BPD, guided by the presentations in \cite{Viola_1999, Zanardi_1999, lidar_brun_2013, average_hamiltonian_theory}, to then point out how the twirl designed earlier can effectively be used to produce a system which stroboscopically evolves via the twirled Hamiltonian. This will be realized by interleaving the free system evolution with displacements applied instantaneously at time-steps dictated by the probability of the corresponding displacement from the $N-$level twirl-measure \eqref{eq:measure_N}.

Let the evolution of a quantum harmonic oscillator be guided by a Hamiltonian 
\begin{equation}
H(t) = H_0 + H_C(t),
\end{equation}
where  $H_C(t)$ describe control pulses applied to the system which is otherwise described by Hamiltonian $H_0$. 
In a frame co-rotating with the control evolution, also called the \textit{toggling frame}, the effective time evolution is then 
\begin{equation}
\tilde{U}(t)=U_C^{\dagger}(t)U(t), \hspace{.5cm} \frac{d}{dt}\tilde{U}(t)=-i \left[U_C^{\dagger}(t) H_0 U_C(t) \right]\tilde{U}(t)=-i\tilde{H}(t)\tilde{U}(t).
\end{equation}
Assuming periodic control $U_C(t+FT_C)=U_C(t),\; \forall F \in \mathbb{N}_0,$ the $T_C$ periodicity of the control is inherited by the effective Hamiltonian $\tilde{H}(t)$ that guides the evolution in the toggling frame. As result (assuming $U_C(0)=I$), the \textit{stroboscopic} effective evolution, as seen by only probing the system at times that are a multiple of the period length $FT_C$ is given by
$
     \tilde{U}(FT_C) =\tilde{U}(T_C)^F.
$
Using the Magnus expansion \eqref{eq:Magnus}, \cite{magnus, average_hamiltonian_theory}  the effective time evolution over the time unit $T_C$ can be represented as the time evolution via an effective, \textit{time-independent} Hamiltonian $\overline{H}_{av}=\frac{1}{T_C}\overline{H}(T_C)$
where, as in \eqref{eq:Magnus} $\overline{H}_{av}$ is given by an infinite series 
\begin{equation}
    \overline{H}_{av}=\sum_j \overline{H}^{(j)}_{av} \label{eq:Floquet_Magnus}
\end{equation}
with terms labeled by $(j)$ consist of $j$ nested commutators \cite{Iserles99expansionsthat} and are of order $O(T_C^j)$
\begin{align}
    \overline{H}^{(0)} _{av}&= \frac{1}{T_C}\int_0^{T_C} \tilde{H}(t') dt' , \label{eq:dd1}\\
    \overline{H}^{(1)}_{av} &= \frac{-i}{2T_C}\int_0^{T_C}\int_0^{t''}  [\tilde{H}(t''), \tilde{H}(t')] dt' dt'', \\
    \overline{H}^{(2)}_{av} &= ...
\end{align}
In BPD control is implemented by discrete instantaneous control pulse sequence $\{ P_k,\Delta t_k\}$, consisting of $M$ pulses $P_k$ following the time-intervals $\Delta t_k= \tau_k T_C$ of system evolution with $H_0$. The pulses are assumed to satisfy $P_0=I$ and $\prod_k P_k=I$. 
Specializing eq. \eqref{eq:dd1} to this setting it can be checked that the first order average time evolution is given by 
\begin{equation}
     \overline{H}^{(0)}=\sum_{k=1}^{M} \tau_k Q_k^{\dagger} H_0 Q_k, \label{eq:dd2}
\end{equation}
with the accumulated control pulse $Q_k=P_{k-1}P_{k-2}...P_1$, with $P_k=Q_{k+1}Q_k^{\dagger}$.

When the control pulses are chosen such that $Q_k$ as in \eqref{eq:dd2} correspond to multiples of logical displacements  for which $\tau_k$ yields the $N-$level twirl probability, i.e. by choosing  
\begin{align}
    \tau_{k}&=\tau_{k(n,m)}=P_X^N(n,m) \\
    Q_k &=Q_{k(n,m)}=D\left((n+im)\sqrt{\frac{\pi}{2}}\right),
\end{align}
the non-equidistant BPD sequence has a twirling of the Hamiltonian as effect, where the 2D displacements are labelled by $\{1,..,M=(2N+1)^2\}$ and $k(\cdot,\,\cdot)$ determines the order of the displacements.

\paragraph*{Control path ordering} To minimize the experimental effort of implementing the control pulses $P_k=Q_{k+1}Q_k^{\dagger}$ and to maintain $\prod_k P_k=I$, I construct a \textit{control graph}:  the vertices  $(n,m)$ of the control graph label the (accumulated) displacement amplitudes $\{Q_k\}$ (similar to fig. \ref{fig:rdwalkX}), and edges represent the allowed transitions, that is choices of $\{P_k=Q_{k+1}Q_k^{\dagger}\}$ to map between different accumulated control pulses. To minimize the necessary displacement amplitude at each instance I choose the edge connectivity as in a  \textit{kings graph} $C=(V=\{(n,m)\},E_{king})$, which is known to have Hamiltonian cycles for each $N$. The ordering $k(\cdot,\cdot)$ is then given by a  Hamiltonian cycle on the vertices of $C$ starting at $k(0,0)=1$. This construction ensures that each instantaneous control pulse displacement amplitude is bounded by a constant $|\xi|\leq \sqrt{\pi}$. For an example of a possible control sequence for $N=1$ see fig. \ref{fig:Hamiltonian_cycle}.

\begin{figure}
\includegraphics[width=.25\textwidth]{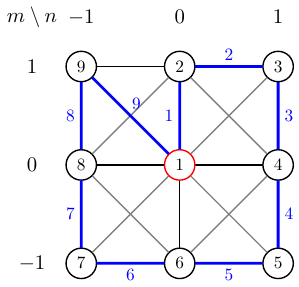}
\caption{One possible ordering of the control path as given by a Hamiltonian cycle on the control graph. Each vertex is associated with the accumulated control pulse $Q_k$ at time-index $k=k(n,m)$ and each edge with the instantaneous control pulses $P_k$ applied at the respective time step $k$ as indicated by the time labels on the edges.} \label{fig:Hamiltonian_cycle}
\end{figure}

The previously considered $N-step$ state twirl is thus mapped to an open-loop control sequence consisting of $M=(2N+1)^2$ displacements, where the relative time-interval lengths are given by the discrete state twirling measure. Stroboscopically, in first order Magnus expansion, this approximately projects the system Hamiltonian onto stabilizer displacements. I.e. given a system Hamiltonian $H_0$ with characteristic function $h_0(\alpha)$, the effective characteristic function under $N-$level BPD will be given by
\begin{equation}
\overline{h}_{av}^{(0)}= h_0 (\alpha)\left[\frac{1}{4} \left(1+\cos \sqrt{2\pi}\mathrm{Re}(\alpha)\right) \left(1+\cos \sqrt{2\pi}\mathrm{Im}(\alpha)\right) \right]^N.  \label{eq:twirl_hamiltonian}
\end{equation}
The DD sequence approximately \textit{filters out} Hamiltonian terms that are not close to stabilizer shifts, which are exactly retrieved in the Hamiltonian for $N \rightarrow \infty$. 

If the system Hamiltonian is chosen such that that its characteristic function has support on the GKP grid, this provides a way to dynamically engineer the GKP Hamiltonian \eqref{eq:h_GKP}. Similarly, a sequence employing stabilizer shifts can be used to distill GKP logical Hamiltonians to realize continuous logical Pauli rotations or to initialize a system via thermalization into logical Pauli-Eigenstates.

\subsection{Realization of dynamically  protected GKP states in superconducting circuits}
With the above machinery, it is possible to obtain effective average Hamiltonians that resemble the passive GKP Hamiltonian \eqref{eq:GKPHamiltonian} using the circuit shown in fig. \ref{fig:circuit}, of which the Hamiltonian in the rotating frame for $\hbar\omega\gg E_J$ is well supported on the four points in phase space $\alpha\in \{\pm\sqrt{2\pi},\, \pm i \sqrt{2\pi}\}$. For this purpose, the impedance of the cavity needs to be tuned to $Z=2R_Q$ such that $\varphi=\sqrt{2\pi}$. \footnote{Alternatively, any Hamiltonian whose characteristic function is supported on all four stabilizer displacements (red dots in figure \ref{fig:hchar}) would do. Trivially, the Hamiltonian $H_{\pi}=-(-1)^{\hat{n}}$, for which $h(\alpha)=const.$  \cite{Royer_1977} and the  \textit{4-legged cat} Hamiltonian  $H_{4, cat}$ with $\alpha=\sqrt{\frac{\pi}{2}}$ for which the characteristic function was given in \eqref{eq:4cat} would be suitable candidates as well. To my knowledge there are no systems that directly exhibit such Hamiltonians. Albeit driven-dissipative engineering was used in \citep{Cohen_2017, cohen:tel-01545186} to effectively obtain $H_{4, cat}$, the same approach cannot be used here, as the steady states of the driven-dissipation process are incompatible with the targeted GKP-states.}

 The substrate-Hamiltonian considered here is thus
\begin{equation}
H_{sub}/E_J=-e^{-\pi}\sum_n L_n(2\pi)\ket{n}\bra{n}, \label{eq:H_sub}
\end{equation}
with Laguerre polynomials $L_n(\cdot)$, to which I apply the dynamical decoupling sequence introduced above. As described in eq. \eqref{eq:twirl_hamiltonian}, the dynamical decoupling sequence approximately projects the characteristic function of $H_{sub}$ (see fig. \ref{fig:hchar}) onto stabilizer-displacements. Since the substrate-Hamiltonian is given in the rotating frame, the displacements applied in the lab frame need to be adapted to account for that rotation. 

The first two Eigenstates of the effective average Hamiltonian under $N-level$ logical Twirl $H^{(0)}_{av}$  for $N=1, 5, 10, 15$ and $E_J=1$ are shown in fig. \ref{fig:hJJ_approx}. They are found to approximate the GKP magic states

\begin{align}
\ket{H^+_{\Delta}}&=\cos{\left(\frac{\pi}{8}\right)}\ket{0_{\Delta}}+\sin{\left(\frac{\pi}{8}\right)}\ket{1_{\Delta}},\\
\ket{H^-_{\Delta}}&=-\sin{\left(\frac{\pi}{8}\right)}\ket{0_{\Delta}}+\cos{\left(\frac{\pi}{8}\right)}\ket{1_{\Delta}},
\end{align}
with approximation parameter $\Delta_{q/p} \propto N^{-0.185}$.

\begin{figure}
\includegraphics{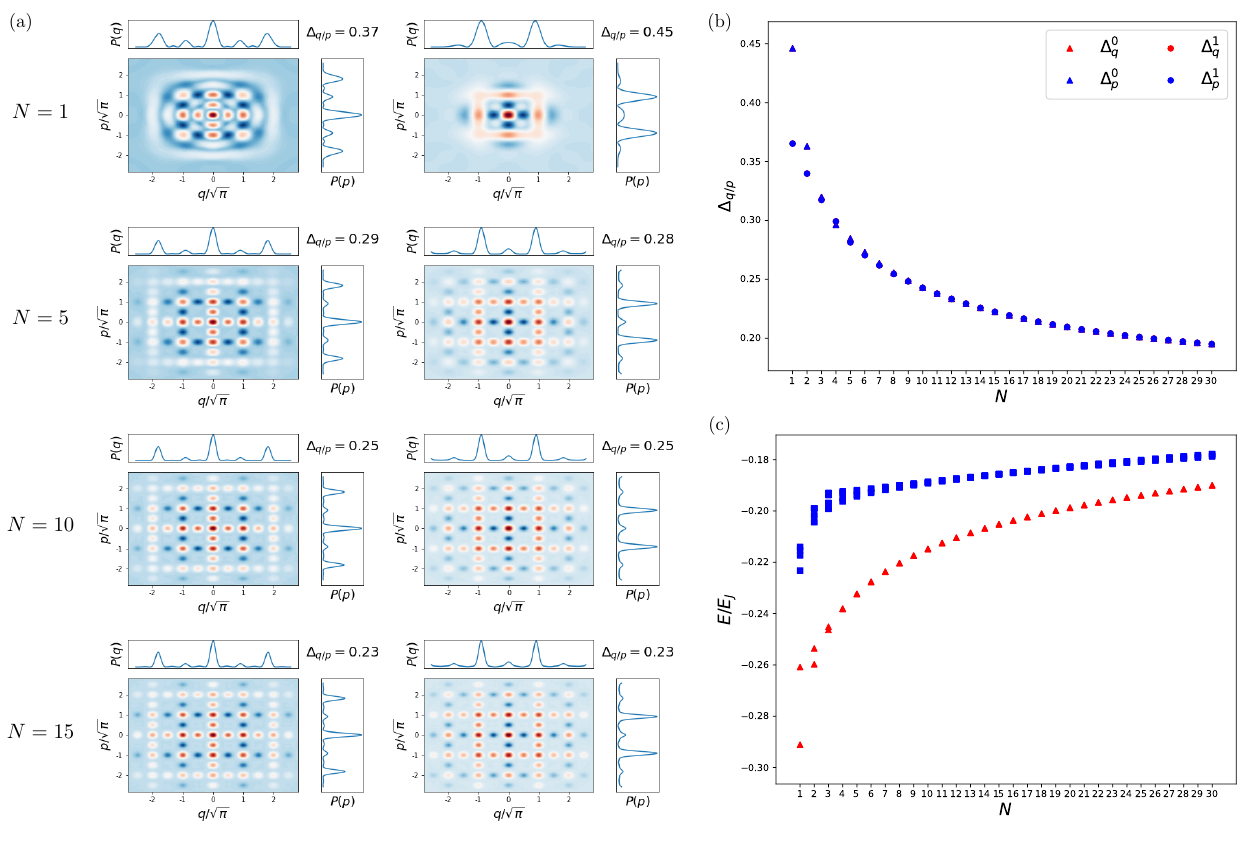}
\caption{$(a)$ Wigner functions of the two lowest eigenstates of $H^{(0)}_{av}$ for twirling level $N=1,..,15$ together with their effective squeezing parameter are shown. $(b)$  Finite squeezing parameters for $N=1..30$  and $(c)$  the ten lowest eigenenergies of $H^{(0)}_{av}$ are plotted. The two lowest (degenerate for large $N\geq 4$) eigenenergies approximating the GKP $\ket{\overline{H_{\pm}}}$ states are coloured red and are separated by a gap that shrinks with $N$ from the higher levels.
The effective squeezing of the two lowest eigenstates becomes symmetric in $q,\,p$ for $N\geq 8$ and approximately scales with the twirling-level $\Delta_{q/p}\propto N^{-0.185}$.\\
Code that was used to produce this figure can be found under \url{https://github.com/JonCYeh/GKP_DD}.
} \label{fig:hJJ_approx}
\end{figure}

Interestingly, in \ref{fig:hJJ_approx} $(c)$ it can be observed that the effective Hamiltonian is \textit{gapped}, where the gap shrinks with increasing twirling level $N$. Since this is not the case for the perfect GKP Hamiltonian, the gap is an effect of the imperfection of the approximation which has similarly been observed in \cite{rymarz2020hardwareencoding}. In practical scenarios, one may choose $N$ for a suitable trade-off between low squeezing parameter and larger gap of the Hamiltonian.

This pathway to realize the GKP Hamiltonian can also be understood in the broader framework of \textit{Floquet engineering} \cite{Rubio_Abadal_2020, Mori_2018}, which can be connected to by comparing the \textit{Floquet-Magnus} expansion to the expression for the average Hamiltonian given earlier. It has been observed that prior to an ultimate heat-death of periodically driven quantum systems a regime called \textit{Floquet-prethermalization} emerges in which the state of the system can be well described by the Gibbs state of the static effective (average) Hamiltonian of the system evolution \cite{Kuwahara_2016, Mori_2018_rev, Shirai_2015, Mori_2016}. This regime is predicted to remain valid to  a time exponentially long in the drive-frequency $\tau_h=e^{O(T_C^{-1})}$, giving ground for passively protected encoding of a GKP qubit in the quasi-stationary  non-equilibrium state of a driven harmonic oscillator as in this proposal. 
I remark that, similar to the findings here, the authors of \cite{Liang_2018} found that a comparable translation symmetric phase-space lattice can also be obtained for the motion of a driven ultracold atom trapped in an harmonic potential, where the drive a priori has a phase-space periodic structure. I refer to the numerical investigation of the thermalization behaviour in \cite{Liang_2018}.

\subsection{Parameter Inequalities}
Here I briefly comment on the  the experimental parameters necessary to realize the protocol.

%

The RWA for $h_{JJ}^{(1)}$ is generally valid for sufficiently large cavity frequencies $\hbar \omega \gg E_J$ which could be considered smeared out in the limit $\omega t_{min} \geq 2\pi$, where $t_{min}=2^{-4N}T_C$ corresponds to the smallest time scale where this approximation needs to be valid. The latter inequality sets the lower bound for the period length of the stroboscopic evolution  as 
$T_C \geq \frac{2\pi}{{\omega}} 2^{4 N}$. For the average Hamiltonian to remain valid, it would be desirable to have a cavity with large $\omega$ to minimize this bound for some finite $N$.

The final limitation of this scheme is given by the speed limit for displacement operations. For the BB approximation to hold, it is necessary that displacements of amplitude $|\xi|=\sqrt{\pi}$ can be realized in a time $t_{disp}\ll 2^{-4N}T_C$.
This necessitates that $\sqrt{2} T_X \ll 2^{-4N}T_C$ where $T_X$ is the minimal time it takes to realize a (elementary) displacement by $|\xi|=\sqrt{\frac{\pi}{2}}$. Altogether, this imposes a bound of $T_X \ll\frac{1}{\sqrt{2}} \frac{2\pi}{\omega}$ on the speed to implement displacements.

As a generic example, I find that with a cavity frequency of $\frac{\omega}{2\pi}=5.26 GHz$  \cite{Campagne_Ibarcq_2020}, elementary displacements must be realizable in a time $T_X \ll 0.13 ns$. To not induce any additional capacitances to the circuit for implementation of the displacements, I expect that they are better to be implemented by inductively coupling a drive to the oscillator.
\\ 

\subsection{Quantum Computation and State Initialization}\label{sec:computation}
Here I outline a few ways to use the above protocol for GKP-protected quantum computation. It is already known that the $\ket{H_{\pm}}$ states together with Clifford operations and homodyne measurement form a universal set of resources for GKP quantum computation \cite{Yamasaki_2020}, such that the proposed system immediately yields a platform for protected universal GKP quantum computation by executing standard GKP gates on the oscillator in between or during the stroboscopic evolution steps. For displacement gates, e.g. $\overline{X}$, the DD BPD sequence may potentially also be adapted to yield a net non-trivial displacement for chosen evolution rounds, e.g. by choosing $\{P_k\}$ such that $P_0P_1... P_M =\overline{X}$. It will be important that gates can be executed sufficiently quickly in a time $t_g \ll T_C$ to maintain a clear separation of time-scales.
 
Similar to the generation of a stabilizer Hamiltonian, implementing a cavity with impedance such that $\varphi=\sqrt{\frac{\pi}{2}}$, and choosing the control pulses such that $Q^X_k =Q_{k(2n,m)}=D\left((2n+i m)\sqrt{\frac{\pi}{2}}\right)$ allows to distill effective average Hamiltonians $H^{(1)}_{av}\propto \overline{X}$ or similarly for $Q^Z_k =Q_{k(n,2m)}=D\left((n+i 2m)\sqrt{\frac{\pi}{2}}\right)$ Hamiltonian $H^{(1)}_{av}\propto \overline{Z}$ can be obtained. By the same mechanism as to approximate the GKP Hamiltonian this allows to approximate logical GKP Pauli-Hamiltonians. 

One possibility for state initialization would be to measure a selected GKP logical operator, e.g. $\overline{Z}$, when the system is thermalized to the groundspace by coupling the oscillator to an extra qubit-mode via controlled-displacement and reading out the qubit \cite{terhal_review}. As an alternative, I expect that one may also  interpolate between a stabilizer- and logical Pauli Hamiltonian by incorporating a $\cos( 2\varphi)$ element \cite{Smith_2020} in the circuit which implies a Hamiltonian characteristic function comprising a double ring with radii $\varphi_1=\sqrt{\frac{\pi}{2}}$ and $\varphi_2=\sqrt{2\pi}$.  On such a substrate Hamiltonian it will be sufficient to adapt the displacement shift-amplitudes to interpolate between stabilizer- and logical displacement support. By thermalizing the system starting with a logical Pauli Hamiltonian and then switching to the symmetric logical control displacements $\{Q_{k(n,m)}\}$ would then in principle allow to initialize the system in a logical Pauli eigenstate.

\section{Discussion \& Outlook}
In this work I have derived energy-constrained twirling operations for GKP quantum error correction that are experimentally implementable and could be used for the suppression of coherent error accumulation or GKP state-distillation.  Furthermore, I showed how the designed twirling protocol can be translated into a dynamical decoupling sequence that is used to engineer an effective GKP stabilizer Hamiltonian.

Unlike the traditional application of DD, coherent control here is used to effectively project an engineered substrate-Hamiltonian onto the GKP stabilizer Hamiltonian and not employed for the removal noise processes. The degenerate ground state of the engineered Hamiltonian has been shown to host a GKP qubit. Due to a non-zero gap to higher excitations which is proportional to the Josephson energy, this ground space is expected to be useful for fault tolerant quantum computation. 

Practical realizations of the proposed scheme would require implementation of high impedance cavity modes, which are still challenging  with typical values ranging at $Z <R_Q$ \cite{Cohen_2017}. However, in contrast to the requirements for an autonomously protected cat state as in \cite{Cohen_2017}, the necessary cavity mode impedance of $Z=2 R_Q$ in this proposal to distill a stabilizer Hamiltonian appears as rather modest. The caveat here lies in the necessity for implementing very fast displacements of the cavity mode (while keeping the impedance and frequency high), which needs to happen at least an order of magnitude faster than than what has been demonstrated in recent experiments \cite{Campagne_Ibarcq_2020}. The strength of the present proposal for the implementation of the GKP Hamiltonian is that only well known ``textbook" superconducting circuit elements are necessary.

It would be interesting to study whether the same effect on the Hamiltonian can be obtained via continuous drive without employing the  bang-bang control limit. Depending on the maximal frequency by which displacements can be implemented in a concrete device, one may also investigate the use of \textit{concatenated} dynamical decoupling techniques to further optimise the proposed protocol. In an appropriate architecture, concatenation may also be used to interleave a DD protocol for distillation of a stabilizer Hamiltonian with one to distill a logical Hamiltonian for state reset or the implementation of logical Pauli rotations.

Beyond the questions regarding practical implementations a more basic question concerns the thermalization behaviour of the proposed system which relates to its capabilities in passive quantum error correction.  It would be interesting to study in how far thermalization behaviour can be linked to a rigorous classification of passive error correction as was performed in \cite{lieu2020symmetry} for cat qubits, in systems with an engineered Hamiltonian such as the one proposed here. Finally, I hope that thinking about Floquet engineering in terms of projecting away unwanted terms from an -- easier to implement -- substrate Hamiltonian via a dynamical decoupling sequence could be a useful perspective to exploit more generally.



\begin{acknowledgements}
I would like to thank R. Alexander and B. Baragiola for stimulating discussions that inspired this project, P. Faist for helpful discussions and A. Ciani for sharing his expertise on superconducting circuits. Furthermore I thank  J. Eisert, F. Arzani, R. Alexander and B. Terhal for valuable feedback. I am grateful for the the development and maintenance of QuTiP \cite{qutip_1, qutip_2} which has been used for the numerical calculations in this paper.
\end{acknowledgements}

\bibliography{references}

\begin{widetext}
\section{Appendix}

\subsection{Derivation of the substrate Hamiltonian}
In this section I will outline the derivation of the Hamiltonian corresponding to the circuit in fig. \ref{fig:circuit} and how to arrive at \ref{eq:hJJ}. Using standard circuit quantization \cite{Girvin_LectureNotes,Ciani:765439} the Hamiltonian can be expressed in terms of flux $\Phi$ and charge $Q$ that satisfy $[\Phi,Q]=i\hbar$ as

\begin{equation}
H=\frac{Q^2}{2C}+\frac{\Phi^2}{2L}-E_J\cos\left(\frac{2\pi}{\Phi_0} \Phi\right),
\end{equation}
where $\Phi_0=\frac{h}{2e}=2e R_Q$ is the flux quantum.
Expressing the Hamiltonian in terms of the cavity frequency $\omega=\sqrt{LC}^{-1}$ and creation and annihilation operators
\begin{align}
a &=\frac{1}{\sqrt{2L\hbar\omega}}\Phi+\frac{i}{\sqrt{2C\hbar\omega}}Q,\\
a^{\dagger} &=\frac{1}{\sqrt{2L\hbar\omega}}\Phi-\frac{i}{\sqrt{2C\hbar\omega}}Q,
\end{align}
which obey the commutation relation $[a,a^{\dagger}]=1$
the flux and charge operators can be expressed as
\begin{align}
\Phi&=\sqrt{\frac{\hbar Z}{2}} (a+a^{\dagger}) \\
Q &=-i\sqrt{\frac{\hbar}{2Z}} (a-a^{\dagger}),
\end{align}
where $Z=\sqrt{\frac{L}{C}}$ is the impedance of the cavity mode.
In this representation the Hamiltonian becomes
\begin{equation}
H= \underbrace{\hbar \omega\left(a^{\dagger}a+\frac{1}{2}\right) }_{H_0}-E_J \cos\left( \underbrace{\frac{2\pi}{\Phi_0} \sqrt{\frac{\hbar Z}{2}}}_{\varphi} (a+a^{\dagger}) \right).
\end{equation}
The constant factor inside the $cos(\cdot)$ term can be simplified to
\begin{equation}
\varphi=\sqrt{\frac{(2\pi)^2}{\frac{h^2}{(2e)^2}} \frac{\hbar Z}{2}}=\sqrt{\frac{\pi Z}{R_Q}}.
\end{equation}
In the frame rotating with $H_0$ the Hamiltonian reads
\begin{align}
H &=-E_J \cos\left( \varphi( e^{-i\omega t}a+e^{i\omega t}a^{\dagger}) \right)\\
&=-\frac{E_J}{2} \left\{  \exp( i\varphi e^{i\omega t}a^{\dagger}+i\varphi e^{-i\omega t}a  )+\exp( -i\varphi e^{i\omega t}a^{\dagger}-i\varphi e^{-i\omega t}a  ) \right\} \\
&=-\frac{E_J}{2} \left\{ D(i\varphi e^{i\omega t}) + D^{\dagger }(i\varphi e^{i\omega t})  \right\}
\end{align}

such that the Hamiltonian characteristic function in the rotating frame becomes
\begin{equation}
h_{rot}(\alpha; t)=-\frac{E_J}{2} \left\{ \delta^2(\alpha-i\varphi e^{i\omega t})+\delta^2(\alpha+i\varphi e^{i\omega t})\right\}.
\end{equation}

\subsection{Displacement Representation of the Photon Loss Channel}\label{appendix:photonloss}

Here I sketch the main steps of the derivation of the displacement representation of the photon loss channel \eqref{eq:c_photonloss}.
Key simplification to the following calculation is given by following fact to rearrange complex Gaussian integration, using the unitary matrices $$U=\frac{1}{\sqrt{2}}\begin{pmatrix} 
1 & i \\
1 & -i 
\end{pmatrix},\hspace{1cm}U_n=\oplus_{i=1}^n U. $$ 

\begin{fact}\label{lem}
Let $\vec{\alpha}=(\alpha_1,\alpha_1^*,\alpha_2,\alpha_2^*,...)^T,\vec{\beta}=(\beta_1,\beta_1^*,\beta_2,\beta_2^*,...)^T \in \mathbb{C}^{2 n}$ be vectors of complex numbers and their complex conjugate,  and $A=A^T \in \mathbb{C}^{2n\times 2n}:\; \mathrm{Re}(U_n^T A U_n)>0$ a symmetric complex matrix, for which the real part of $U_n^T A U_n$ is positive definite.\\
Then it holds that
\begin{equation}
\int d\vec{\alpha}\; e^{-\frac{1}{2}\vec{\alpha}^T A \vec{\alpha}+ \vec{\beta}^T\vec{\alpha} }=\sqrt{\frac{(2\pi)^n}{\det(A)}} e^{\frac{1}{2}\vec{\beta}^T A^{-1} \vec{\beta}}.
\end{equation}
\end{fact}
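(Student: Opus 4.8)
The natural route is to reduce the complex integral to an ordinary real multivariate Gaussian by exploiting the conjugate-pair structure of $\vec\alpha$. Writing $\alpha_i=x_i+iy_i$, the pair $(\alpha_i,\alpha_i^*)^T$ equals $\sqrt2\,U(x_i,y_i)^T$, so stacking modes gives $\vec\alpha=\sqrt2\,U_n\vec r$ with the real coordinate vector $\vec r=(x_1,y_1,x_2,y_2,\dots)^T\in\mathbb{R}^{2n}$. Substituting this parametrization, the quadratic exponent becomes $-\tfrac12\vec\alpha^T A\vec\alpha=-\vec r^T M\vec r$ with $M:=U_n^T A U_n$, and the linear term becomes $\vec\beta^T\vec\alpha=\sqrt2\,(U_n^T\vec\beta)^T\vec r$. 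The hypothesis $\Re(M)>0$ is precisely the statement that this real Gaussian is absolutely convergent, so the integral is well defined.

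Next I would apply the standard formula $\int_{\mathbb{R}^{2n}}d\vec r\,e^{-\vec r^T M\vec r+\vec b^T\vec r}\propto(\det M)^{-1/2}\,e^{\frac14\vec b^T M^{-1}\vec b}$ with source $\vec b=\sqrt2\,U_n^T\vec\beta$. The exponent collapses cleanly: since $M^{-1}=U_n^{-1}A^{-1}(U_n^{-1})^T$ and $(U_n^{-1})^T U_n^T=I$, one finds $\vec b^T M^{-1}\vec b=2\,\vec\beta^T A^{-1}\vec\beta$, giving exactly the claimed $\tfrac12\vec\beta^T A^{-1}\vec\beta$ in the exponent independently of any normalization choice. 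The prefactor is then fixed by the determinant bookkeeping $\det M=(\det U_n)^2\det A$ together with the chosen per-mode normalization of $d\vec\alpha$, which is what produces the overall $(2\pi)^n/\det A$ under the square root.

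The step that requires real care --- and which I expect to be the main obstacle --- is the passage from a genuine (real, positive-definite) Gaussian to an arbitrary complex symmetric $A$ with only $\Re(M)>0$, because then $\sqrt{\det A}$ involves a branch choice and the manipulations above are only formal. The clean way to make this rigorous is analytic continuation: first establish the identity when $M=U_n^T A U_n$ is real and positive definite, where both sides are manifestly convergent and elementary; then observe that both sides are holomorphic in the entries of $A$ over the connected domain $\{\,\Re(M)>0,\ \det A\neq0\,\}$ and agree on its real slice, so they coincide throughout. This simultaneously legitimizes the formal steps and pins down the branch of the square root. The remaining bookkeeping --- tracking $\det U_n=(-i)^n$, the Jacobian of $\vec\alpha=\sqrt2\,U_n\vec r$, and the convention for $d\vec\alpha$ --- is routine but is what sets the precise constant; I would fix the measure convention explicitly at the outset so that the $(2\pi)^n$ normalization comes out as stated.
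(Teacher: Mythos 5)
Your proposal is correct and takes essentially the same route as the paper's proof: realify the integral via the substitution $\vec\alpha\propto U_n\vec r$, invoke the standard real Gaussian formula, and extend to complex symmetric $A$ with $\Re(U_n^T A U_n)>0$ by analytic continuation on the connected domain (the paper's ``identity principle'' step). The only cosmetic difference is that the paper completes the square before substituting whereas you carry the source term through; you are in fact more explicit than the paper about the $\sqrt2$, Jacobian, and measure-normalization bookkeeping that fixes the prefactor.
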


\begin{proof}
Since $\mathrm{Re}(U_n^T A U_n)>0$, $det(A)\neq 0$, $A$ is invertible. By quadratic completion the expression simplifies to 
 $$\int d\vec{\alpha}\; e^{-\frac{1}{2}\vec{\alpha}^T A \vec{\alpha}+ \vec{\beta}^T\vec{\alpha} }=e^{\frac{1}{2}\vec{\beta}^T A^{-1} \vec{\beta}}\; \int d\vec{\alpha}\; e^{-\frac{1}{2}(\vec{\alpha}-A^{-1}\beta)^T A (\vec{\alpha}-A^{-1}\beta )}=e^{\frac{1}{2}\vec{\beta}^T A^{-1} \vec{\beta}}\; \int d\vec{\alpha}\; e^{-\frac{1}{2}\vec{\alpha}^T A \vec{\alpha}}.$$
Since $U_n$ is unitary, the integration can be substituted by $\vec{\alpha}=U_n \vec{x},\; \vec{x}\in \mathbb{R}^{2n}$ with Jacobi determinant $|det(U_n)|=1$,
$$ \int d\vec{\alpha}\; e^{-\frac{1}{2}\vec{\alpha}^T A \vec{\alpha}}= \int d\vec{x}\; e^{-\frac{1}{2}\vec{x}^T B \vec{x}},$$ where $B=U_n^T A U_n=B^T$ is a complex symmetric matrix with positive definite real part. It is known, that for all real positive definite symmetric matrices $C$, $$\int d\vec{x}\; e^{-\frac{1}{2}\vec{x}^T B \vec{x}} = \sqrt{\frac{(2\pi)^n}{\det(C)}}.$$ Since the cone of all positive definite real matrices is convex, this equality holds on an open and connected subset of $ \mathbb{C}^{2n\times 2n}\sim \mathbb{C}^{4n}$ and therefore extends by the identity principle of complex analysis \cite{Lieb2001}.
\end{proof}

The continuous process matrix of the photon loss channel is found by first evaluating the displacement-coefficients of the Kraus operators.
\begin{align*}
c_l^{\gamma}(\delta)
&=Tr[\hat{D}^{\dagger}(\delta)\hat{E}_l] \\
&=\left(\frac{\gamma}{1-\gamma}\right)^{\frac{l}{2}} \frac{1}{\sqrt{l!}}\frac{1}{\pi^2}\int d^2\alpha d^2\beta \; \underbrace{\bra{\alpha}\hat{D}^{\dagger}(\delta)\hat{a}^l\ket{\beta}}_{\beta^l \braket{\alpha+\delta|\beta}e^{\omega(\alpha,\delta)/2}} \underbrace{\bra{\beta} (1-\gamma)^{\frac{\hat{n}}{2}} \ket{\alpha}}_{e^{-\frac{\gamma}{2}|\alpha|^2}\braket{\beta|\sqrt{1-\gamma}\alpha}} \\
&=\left(\frac{\gamma}{1-\gamma}\right)^{\frac{l}{2}} \frac{1}{\sqrt{l!}}\frac{1}{\pi^2} \int d^2 \alpha d^2\beta \; \beta^l e^{-|\alpha|^2-|\beta|^2+\alpha^*\beta+\sqrt{1-\gamma}\beta^* \alpha +\delta^* \beta - \alpha^* \delta -\frac{1}{2}|\delta|^2}.
\end{align*}
The exponent can be rearranged by introducing $\vec{\alpha}=(\alpha,\alpha^*,\beta,\beta^*)^T$, such that the integral is rewritten to
\begin{align}
\hspace{-1.3cm}...&=\left. \frac{1}{4}\frac{\partial^l}{\partial (\epsilon^*)^l}\right|_{\epsilon^*=0} \int d\vec{\alpha} e^{-\frac{1}{2}\vec{\alpha}^T A \vec{\alpha}
+\vec{J}^T\vec{\alpha}} e^{-\frac{1}{2}|\delta|^2}, \label{goto}
\end{align}
with
\begin{equation*}
A=
\begin{pmatrix} 
0 & 1 & 0 & -\sqrt{1-\gamma} \\
1 & 0 & -1 & 0 \\
0 & -1 & 0 & 1 \\
-\sqrt{1-\gamma} & 0 & 1 & 0 
\end{pmatrix}, \hspace{1cm} \vec{J}=(0,-\delta,\delta^*+\epsilon^*,0)^T.
\end{equation*}

It holds that $det(A)=(1-\sqrt{1-\gamma})^2=:\overline{\gamma}^{-2}$, with rescaled loss time $$\overline{\gamma}=\frac{1}{1-\sqrt{1-\gamma}}=\frac{2}{\gamma}-\frac{1}{2}-\frac{\gamma}{8} +O(\gamma^2)\in [1,\infty). $$

Using this rescaled loss time, the inverse of $A$ becomes
\begin{equation*}
A^{-1}=
\begin{pmatrix} 
0 & \overline{\gamma} & 0 & \overline{\gamma} \\
\overline{\gamma} & 0 & \overline{\gamma}-1 & 0 \\
0 & \overline{\gamma}-1 & 0 & \overline{\gamma} \\
\overline{\gamma} & 0 & \overline{\gamma} & 0 
\end{pmatrix},\hspace{1cm} \overline{\gamma}-1=\frac{\sqrt{1-\gamma}}{1-\sqrt{1-\gamma}}.
\end{equation*}
By fact \ref{lem}, the integration in eq. \eqref{goto} therefore evaluates to

\begin{align*}
\hspace{-1.3cm}..&=\left. \pi^2 \overline{\gamma} \frac{\partial^l}{\partial (\epsilon^*)^l}\right|_{\epsilon^*=0} e^{-(\overline{\gamma}-1)|\delta|^2-(\overline{\gamma}-1)\delta\epsilon^*-\frac{1}{2}|\delta|^2} \\
&=\pi^2 \overline{\gamma}\,(-(\overline{\gamma}-1)\delta)^l\, e^{-\frac{1}{2}(2\overline{\gamma}-1)|\delta|^2}.
\end{align*}

In total, this yields 
\begin{equation}
c_l^{\gamma}(\delta)=\frac{1}{\sqrt{l!}} \left(\frac{\gamma}{1-\gamma}\right)^{\frac{l}{2}}\overline{\gamma}\, (-(\overline{\gamma}-1)\delta)^l\,e^{-\frac{1}{2}(2\overline{\gamma}-1)|\delta|^2}. 
\end{equation}
The continuous process matrix is therefore evaluated to

\begin{align}
c^{\gamma}(\delta,\xi)
&=\frac{1}{\pi^2} \sum_l c_l^{\gamma}(\delta)c_l^{\gamma *}(\xi) \nonumber\\
&=\left(\frac{\overline{\gamma}}{\pi}\right)^2 e^{-\frac{1}{2}(2\overline{\gamma}-1)(|\delta|^2+|\xi|^2)}\sum_l \frac{1}{l!}\left[\left(\frac{\gamma}{1-\gamma}\right)(\overline{\gamma}-1)^2\delta\xi^* \right]^l \nonumber\\
&=\left(\frac{\overline{\gamma}}{\pi}\right)^2 e^{-\frac{1}{2}(2\overline{\gamma}-1)\left[|\delta|^2+|\xi|^2-2\xi^*\delta \right]}.
\end{align}
Using the identity for coherent states $$\braket{\xi|\delta }=e^{-\frac{1}{2}|\delta-\xi|^2-\omega(\xi,\delta)/2}=e^{-\frac{1}{2}(|\xi|^2+|\delta|^2-2\xi^* \delta)}$$ the last term can be expressed as power of the inner product of coherent states at $\xi,\,\delta$.

\subsection{Recompiling $\tau^N$}\label{appendix:recompile}
For $N$ repetitions of the local twirl, the displacements of consecutive rounds can be combined and  indexed by $(n,m) \in \{-N,...,N\}^2$. The resulting probability distribution of displacements can be interpreted as one of a $N$-step random walk starting at $(0,0)$ on a square lattice, where the vertices represent the labels $(n,m)$. By construction, this $2D$ random walk can be viewed as the Cartesian product of two $1D$ random walks $X$, see fig. \ref{fig:rdwalkX}, such that the probability to end up on vertex $(n,m)$ after $N$ steps decomposes as 
\begin{equation}
P_X^N(n,m)=P_X^N(n)P_X^N(m).    
\end{equation}
Each step in random walk $X=\sum_i X_i$ follows the probability distribution 
\begin{equation}
X_i=
\begin{cases}
1 &\, p=\frac{1}{4} \\
0 & \, p=\frac{1}{2} \\
-1 &\, p=\frac{1}{4}.
\end{cases}
\label{eq:RWX}
\end{equation}
A single step of random walk $X$ is equivalent to $2-$steps of a random walk $Y=\sum_j Y_j$ with half the step-length, i.e.
\begin{equation}
Y_i=
\begin{cases}
\frac{1}{2} &\, p=\frac{1}{2} \\
-\frac{1}{2} &\, p=\frac{1}{2}.
\end{cases}
\label{eq:RWY}
\end{equation}
Denoting the number of positive (negative) steps by $Y^+ = |\{i: Y_i>0 \}|$ ($Y^- = |\{i: Y_i<0 \}|$), the probability to end up on vertex $k=\frac{Y^+-Y^-}{2}$ after $N'=Y^++Y^-$ is given by 
\begin{equation}
P(Y_{N'} =k)=\frac{1}{2^{-N'}} \binom{N'}{Y^+}=\frac{1}{2^{-N'}} \binom{N'}{\frac{2k+N'}{2}}.\label{eq:dispN_prob}
\end{equation} 
One can check that $P(Y_2=k)$ reproduces the probabilities given in \eqref{eq:RWX}.
Finally, this yields
\begin{equation}
P_X^N(n,m)=2^{-4 N}\binom{2N}{n+N}\binom{2N}{m+N},
\end{equation}
which gives a closed expression for the  twirl-measure
\begin{equation}
d\mu_N(\gamma)=\sum_{n, m =-N }^N P_X^N(n,m) \delta^{2}\left(n\sqrt{\frac{\pi}{2}}+im\sqrt{\frac{\pi}{2}}-\gamma\right) d^2\gamma. \label{eq:measure_N}
\end{equation}

\end{widetext}

\end{document}